\newtheorem{thm}{Theorem}
\newtheorem{remark}{\bf Remark}
\title{\LARGE \bf
Humans-in-the-Building: Getting Rid of Thermostats for Optimal Thermal Comfort Control in Energy Management Systems
}
\author{Jiali Wang, Yang Tang, \IEEEmembership{Fellow, IEEE}, and Luca Schenato, \IEEEmembership{Fellow, IEEE}
\thanks{This work was supported by the National Natural Science
Foundation of China (No. 62293502).}
\thanks{J. Wang is with Key Laboratory of Smart Manufacturing in Energy Chemical Process, Ministry of Education, East China University of Science and Technology, Shanghai 200237, China, and also with the Department of Information Engineering, University of Padova, 35131 Padova, Italy (e-mail: jialiwang@mail.ecust.edu.cn).}%
\thanks{Y. Tang is with Key Laboratory of Smart Manufacturing in Energy Chemical Process, Ministry of Education, East China University of Science and Technology, Shanghai 200237, China (e-mail: yangtang@ecust.edu.cn).}%
\thanks{L. Schenato is with the Department of Information Engineering, University of Padova, 35131 Padova, Italy (email: schenato@dei.unipd.it).}%
}
\begin{document}

\linespread{1.25}

\maketitle

\thispagestyle{empty}
\pagestyle{empty}

\begin{abstract}
Given the widespread attention to individual thermal comfort, coupled with significant energy-saving potential inherent in energy management systems for optimizing indoor environments, this paper aims to introduce advanced ``Humans-in-the-building" control techniques to redefine the paradigm of indoor temperature design. Firstly, we innovatively redefine the role of individuals in the control loop, establishing a model for users' thermal comfort and constructing discomfort signals based on individual preferences. Unlike traditional temperature-centric approaches, ``thermal comfort control" prioritizes personalized comfort. Then, considering the diversity among users, we propose a novel method to determine the optimal indoor temperature range, thus minimizing discomfort for various users and reducing building energy consumption. Finally, the efficacy of the ``thermal comfort control" approach is substantiated through simulations conducted using Matlab.
\end{abstract}
\begin{keywords}
Humans-in-the-building, thermal comfort, discomfort signals, building energy consumption	
\end{keywords}

\section{INTRODUCTION}
The thermal comfort perception of users is of paramount importance in indoor environmental quality, attracting numerous researchers for in-depth exploration~\cite{kizilkale2019integral, eichler2018humans, tavakkoli2020bonus}. According to the internationally recognized definition, thermal comfort refers to a psychological state of satisfaction with the surrounding temperature. However, some researchers posit that this definition may be an outcome of perceptual processes, viewing thermal comfort as an environmental attribute correlated with physical climate and heating, ventilation, and air conditioning (HVAC) systems control~\cite{heijs1994dependent}. On the other hand, some researchers consider it a subjective sensation, asserting that there is no perfect combination of conditions that can make everyone feel comfortable, even under optimal indoor temperature conditions, where only fewer than $70\%$ of individuals may experience comfort~\cite{markus1980buildings}. Consequently, thermal comfort lacks an absolute definition and depends on indoor and outdoor temperatures, user expectations, and each user's tolerance threshold for temperature.

Acquiring insights into users' diverse perceptions of thermal comfort amidst fluctuations in room temperature is crucial for comprehending their physiological responses~\cite{fanti2015district}. Consequently, this process aids in assessing the impact of individual differences on determining the optimal room temperature. The overall comfort of users is directly determined by their perceptions of negative and positive thermal sensations~\cite{al2022quantifying}. More precisely, the concept of comfort can be viewed as a form of ``complaint": users articulate a ``complaint" about negative thermal sensation when they sense the room temperature is below their ideal comfort level. Conversely, a ``complaint" about positive thermal sensation is expressed when users feel the room temperature is above their ideal level. We categorize the varied responses of individuals to temperature as personal comfort signals. By employing these signals, it is possible to derive an individually customized optimal indoor temperature, considering individual variations, thereby augmenting overall comfort and productivity~\cite{jung2019human}.

Furthermore, integrating residents' comfort perceptions into the energy management process is vital to enhance energy management systems and promote residents' well-being~\cite{gulbinas2015segmentation}. This specifically entails ensuring user comfort while simultaneously reducing energy consumption. However, finding a balance between lowering the energy performance of buildings and enhancing comfort for different users presents a challenge. The main objectives of humans-in-the-building HVAC system operation studies involve minimizing energy consumption to the greatest extent without compromising thermal comfort and optimizing thermal comfort for a diverse range of users~\cite{squercinatemperature}. Given the significant rise in energy costs within the industry and service sectors, the consideration of both reducing energy consumption and individual comfort emerges as a goal with mutual benefits~\cite{fanger1970thermal}.

Currently, research on humans-in-the-building primarily focuses on how to reduce the energy consumption of buildings, without fully taking into account the thermal comfort of each user. In contrast, our approach centers on individual comfort, recognizing user diversity, and aims to minimize energy consumption significantly. 
The contribution of this paper is twofold. The first contribution is that, compared with \cite{inoue2019weak} and \cite{sadowska2023predictive}, we establish a thermal comfort model based on users' sensations from a new perspective, and generate discomfort signals according to individual differences to achieve the optimal indoor temperature.
The second contribution is that, based on user preferences, we propose a novel method to determine the optimal indoor temperature range, aiming to minimize discomfort for various users and reduce building energy consumption. This is an innovative approach as it directly considers the individual differences of each user.

This paper is organized as follows. The user comfort modeling and signaling are given in Section II. In Section III, we introduce the comfort control design. The simulations and results are presented in Section IV. Finally, Section V concludes this paper and outlines future work.

\section{USER COMFORT MODELING AND SIGNALING}

\begin{figure}
	\begin{center}
		\includegraphics[width=0.5\textwidth]{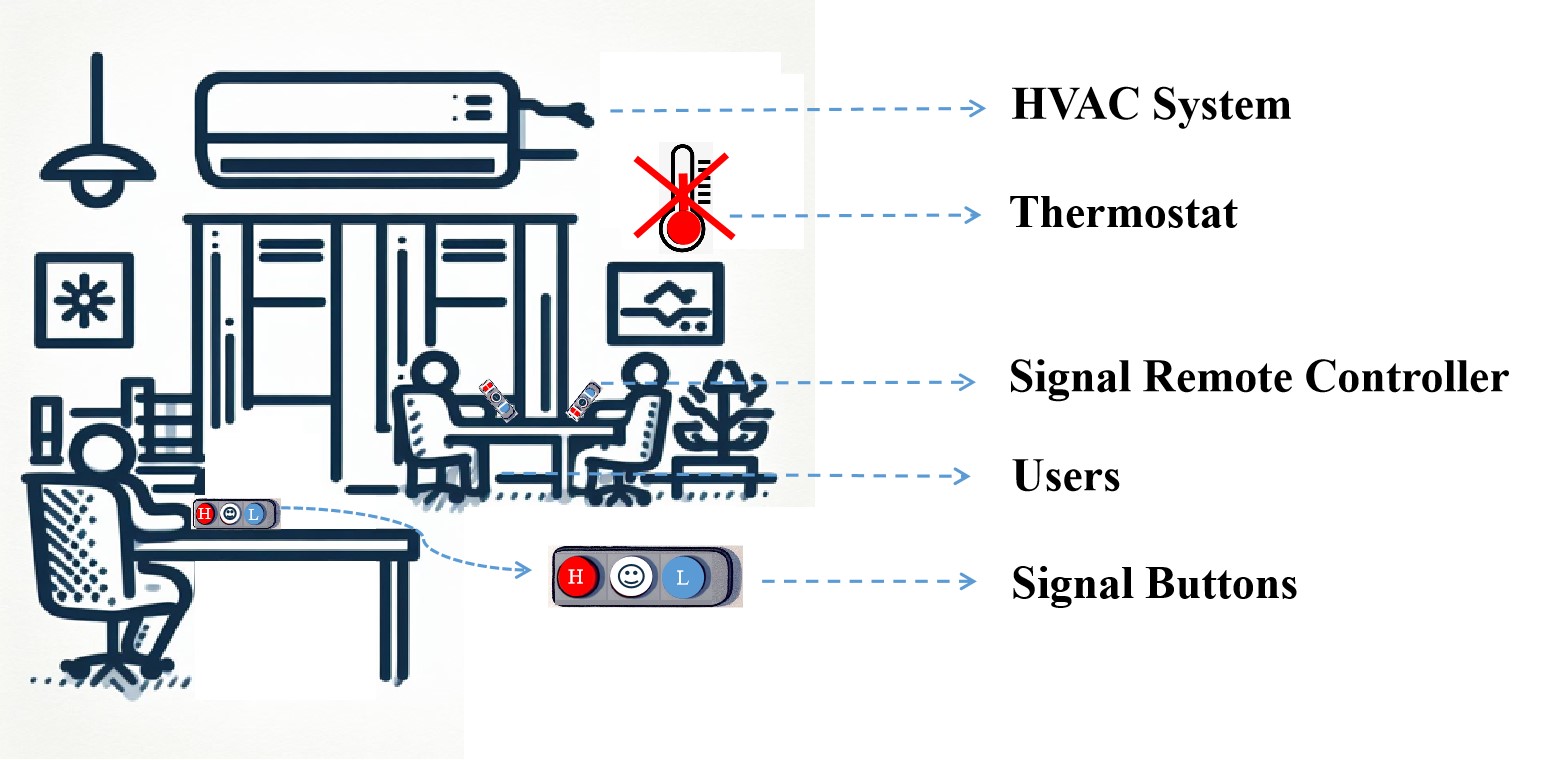}
		\caption{Room Prospective}
		\label{room}
	\end{center}
\end{figure}

Assuming there are $N$ occupants and one HVAC system in a room, as shown in Fig. \ref{room}. Each user holds a temperature sensation signal remote controller, which is equipped with three buttons representing cold, normal, and warm signals, respectively. We use $T_{ext}$ and $T_r$ to represent outdoor temperature and indoor temperature, respectively. Each user $i~(i =1,..., N)$ has their own ideal comfort temperature, denoted as $T_i^*$. Similar to previous studies \cite{al2022quantifying, aryal2018energy}, we utilize Gaussian mixture models to define the discomfort function $\tilde{f}$ for user $i$. The formulation is given by
\begin{align}
	\tilde{f}_i\left( T_r \right) =\left\{ \begin{array}{c}
		1-e^{-\frac{\left( T_r-T_{i}^{*} \right) ^2}{\sigma_i ^2}}, \quad T_r<T_{i}^{*},\\
		e^{-\frac{\left( T_r-T_{i}^{*} \right) ^2}{\sigma_i ^2}}-1, \quad T_r>T_{i}^{*},
	\end{array} \right.
\end{align}
where $\sigma_i$ represents the temperature sensitivity of user $i$. 
As the temperature increases, the user's discomfort decreases gradually from $+1$ to $-1$, indicating that $ \tilde{f}_i\left( T_r \right) $ is monotonically decreasing. Therefore, there exists an optimal room temperature $T_r^*$ where $\tilde{f}_i\left( T_r^* \right) = 0 $.

In practical scenarios, we are more concerned about the degree of discomfort felt by users rather than whether they feel cold or hot. Therefore, we use $ f_i $ to represent the absolute value of the user's discomfort function, i.e., 
\begin{align}
	f_i\left( T_r \right) =|\tilde{f}_i\left( T_r \right) |.
\end{align}
Clearly, as the temperature rises, the absolute discomfort function $f_i$ exhibits a gradual decrease from $1$ to $0$, followed by a subsequent increase to $1$, as depicted in Fig. \ref{four_users_abs}.
\begin{figure}
	\centering
		\includegraphics[width=0.5\textwidth]{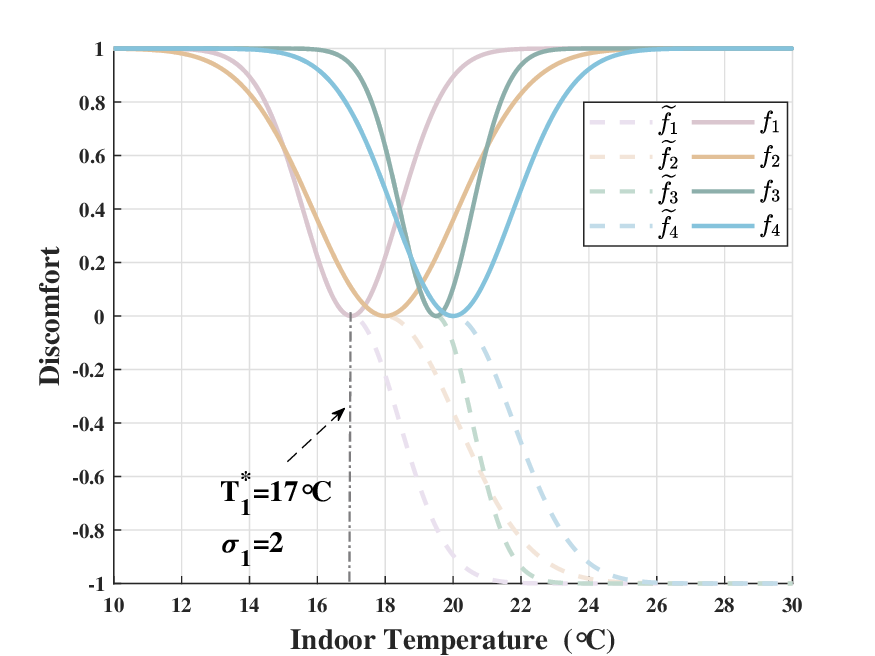}
	\caption{The discomfort $\tilde{f}_i$ and absolute discomfort $f_i$ of four users, $T^*=[17, 18, 19.5, 20]$, $\sigma_i=[2, 3, 1.5, 2.5]$.}
	\label{four_users_abs}
\end{figure}

\begin{figure}
	\begin{center}
		\includegraphics[width=0.5\textwidth]{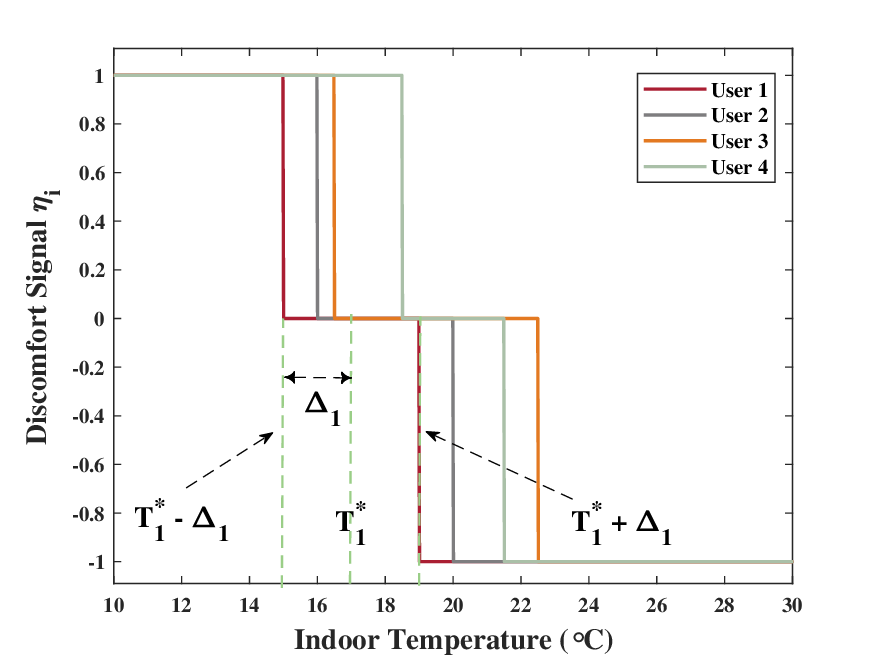}
		\caption{Discomfort feedback signal of four users, $T^*=[17, 18, 19.5, 20]$,  ${\Delta_i} = [2, 2, 3, 1.5]$.}
		\label{discomfort_signal}
	\end{center}
\end{figure}

Each user $i$ has a certain thermal comfort tolerance $\Delta_i$, defining the acceptable indoor temperature range $[T_i^*-\Delta_i, T_i^*+\Delta_i]$ for user $i~(i =1,..., N)$. To comprehensively consider individual thermal comfort, we assume that each user carries a remote controller, allowing real-time expression of their comfort and discomfort signals, as shown in Fig. \ref{discomfort_signal}. Specifically, for user $i$, pressing the red, blue, and white buttons respectively indicates feeling cold ($+1$), feeling hot ($-1$), and feeling comfortable ($0$) within their acceptable temperature range. The ideal temperature for each user remains constant, making the discomfort signal dependent on both indoor temperature $T_r$ and thermal comfort tolerance $\Delta_i$ of all users. Subsequently, we represent the discomfort signal of user $i$ as $\eta_i$, i.e.,
\begin{align}
	\eta _i\left( T_r,\Delta _i \right) 
	=\left\{ \begin{array}{l}
		+1,\ ~T_{r}<T_{i}^*-\Delta _i,\\
	~	0, ~~ ~T_{i}^*-\Delta _i \le T_{r} \le T_{i}^*+\Delta _i,\\
		-1,\ ~T_{r}>T_{i}^*+\Delta _i.
	\end{array} \right.
\end{align}

To achieve the optimal indoor temperature, it is essential to consider the discomfort signals from all users comprehensively. Therefore, we define the sum of discomfort signals from all users as $h$:
\begin{equation}
\label{sum_h}
h\left( T_r,\{\Delta _i\} \right) =\sum_i^N{\eta _i\left( T_r,\Delta _i \right)} \in \mathbb{Z},
\end{equation}
where $\{\Delta_i\}$ represents the set of thermal comfort tolerances for all users. Regarding \( h \), we deduce the following theorem.


\begin{figure}
	\centering
	\subfigure[Symmetrical scenario: $\Delta_1= \Delta_2=2, \Delta_3=3, \Delta_4=1.5$.]{
		\includegraphics[width=0.45\textwidth]{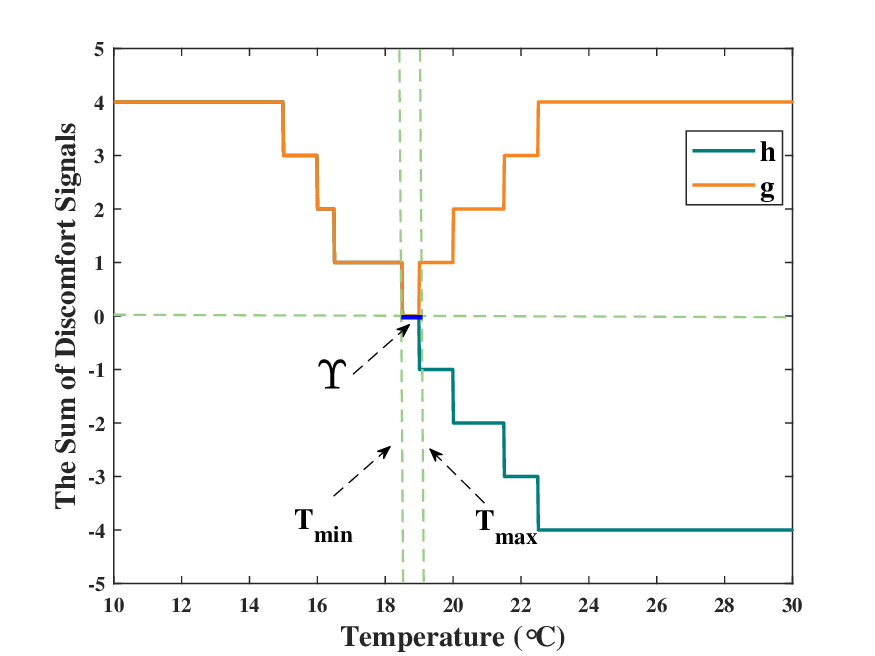}
		\label{g}
	}
	\subfigure[Asymmetrical scenario: $\Delta_1=0.5, \Delta_2=1, \Delta_3=2, \Delta_4=1.5$.]{
		\includegraphics[width=0.45\textwidth]{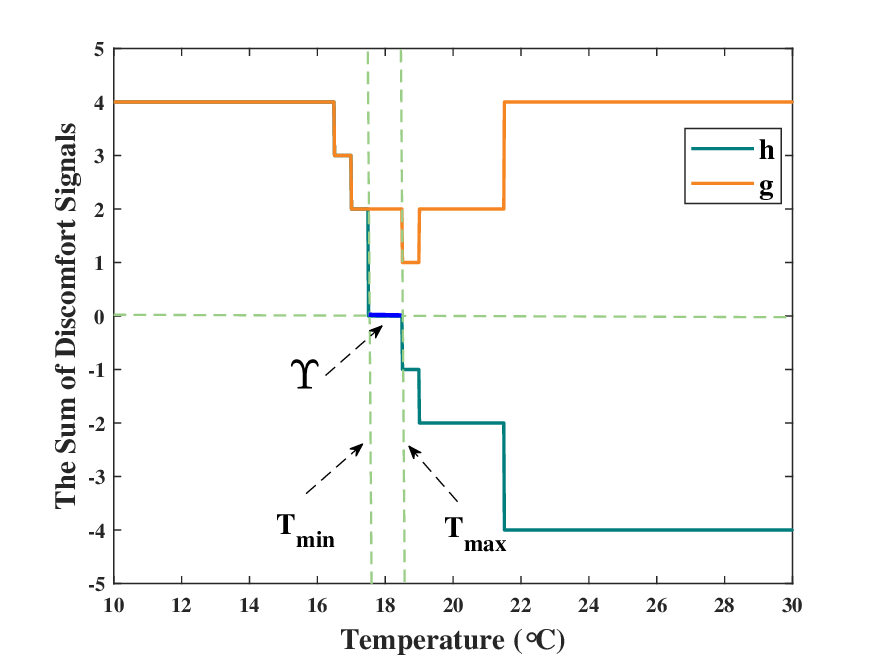}
		\label{asy}
	}
	\caption{The sum and absolute sum of user discomfort signals.}
\end{figure}

\begin{thm}
In the humans-in-the-building control, the total discomfort signal sum $h$ is a monotonically decreasing stepwise function, and
\begin{align*}
\underset{T_r\rightarrow -\infty}{\lim}h\left( T_r,\{\Delta _i\} \right) &= N,\\
\underset{T_r\rightarrow +\infty}{\lim}h\left( T_r,\{\Delta _i\} \right) &= -N.
\end{align*}
\end{thm}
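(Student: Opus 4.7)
The plan is to reduce the theorem to properties of the individual signals $\eta_i$ and then to assemble those properties via the finite sum. First I would examine a single $\eta_i$ as defined in the three-case expression preceding the theorem. Reading off the definition, $\eta_i(\cdot,\Delta_i)$ is piecewise constant with two breakpoints, $T_i^{*}-\Delta_i$ and $T_i^{*}+\Delta_i$, taking the values $+1,0,-1$ on the three intervals in that order. In particular $\eta_i$ is a step function, is monotonically non-increasing in $T_r$ (this is just the observation that on crossing $T_i^{*}-\Delta_i$ it drops from $+1$ to $0$, and on crossing $T_i^{*}+\Delta_i$ it drops from $0$ to $-1$, with no upward jumps), and satisfies $\lim_{T_r\to-\infty}\eta_i=+1$, $\lim_{T_r\to+\infty}\eta_i=-1$ trivially since both limits are reached on unbounded intervals.

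Next I would promote these three properties to the sum $h=\sum_{i=1}^{N}\eta_i$. Monotonicity is the easiest: a finite sum of non-increasing functions is non-increasing, so $h$ is monotonically non-increasing in $T_r$, which I would state as the ``monotonically decreasing'' conclusion claimed in the theorem (in the piecewise-constant sense). For the stepwise character I would let $\mathcal{B}=\{T_i^{*}\pm\Delta_i : i=1,\dots,N\}$ be the multiset of all $2N$ breakpoints and note that on each maximal open interval of $\mathbb{R}\setminus\mathcal{B}$ every $\eta_i$ is constant, hence so is $h$. Thus $h$ is a step function with at most $2N$ jump points; since all individual jumps are downward of size $1$, it takes integer values in $\{-N,\ldots,N\}$, consistent with the declaration $h\in\mathbb{Z}$ in~(\ref{sum_h}).

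For the two limits, I would simply interchange the finite sum with the limit. Because there is no convergence subtlety for a finite sum,
\begin{equation*}
\lim_{T_r\to-\infty}h(T_r,\{\Delta_i\})=\sum_{i=1}^{N}\lim_{T_r\to-\infty}\eta_i(T_r,\Delta_i)=\sum_{i=1}^{N}(+1)=N,
\end{equation*}
and symmetrically $\lim_{T_r\to+\infty}h=-N$. This covers the two displayed limits in the theorem statement.

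The only mildly delicate point, which I would flag explicitly rather than grind through, is that some breakpoints in $\mathcal{B}$ may coincide across different users (for instance if $T_i^{*}+\Delta_i=T_j^{*}-\Delta_j$), in which case two individual downward jumps occur simultaneously and $h$ drops by more than one at that abscissa. This does not disturb monotonicity or the step-function structure; it merely reduces the number of distinct jump points below $2N$. With that caveat, the three items above constitute the complete argument, and I do not expect a substantive obstacle beyond book-keeping the breakpoints.
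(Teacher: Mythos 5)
Your proposal is correct and follows essentially the same route as the paper: analyze each individual signal $\eta_i$ as a non-increasing step function and then pass these properties to the finite sum $h$. In fact your treatment is slightly more complete than the paper's, since you explicitly verify the two limits $\pm N$ and handle coinciding breakpoints, which the paper's proof leaves implicit.
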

\begin{proof}
For user $i$, when the room temperature $T_r$ is lower than the user's ideal temperature $T_i^*$ minus a threshold $\Delta_i$, the discomfort signal is $+1$; when the room temperature $T_r$ is higher than the user's ideal temperature $T_i^*$ plus a threshold $\Delta_i$, the discomfort signal is $-1$; when the room temperature $T_r$ is within the threshold range of the user's ideal temperature $T_i^*$, the discomfort signal for user $i$ is $0$. Therefore, as the room temperature $T_r$ increases, the function $\eta_i(T_r, \Delta_i)$ decreases monotonically. Obviously, the sum of discomfort signals for all users $h(T_r, \{\Delta_i\})$ decreases monotonically in a stepwise fashion with the increase in room temperature $T_r$.
\end{proof}

\begin{remark}
For instance, consider a room with $4$ users, assuming the users' ideal room temperature set is $T^*=[17, 18, 19.5, 20]$, and the comfort tolerance set is ${\Delta_i} = [2, 2, 3, 1.5]$. With the variation of indoor temperature, the sum of discomfort signals for all users can be obtained, as shown by the green curve in Fig. \ref{g}. It is evident that the sum $h$ of these signals monotonically decreases in a stepwise fashion.
\end{remark}

\begin{figure}
	\begin{center}
		\includegraphics[width=0.5\textwidth]{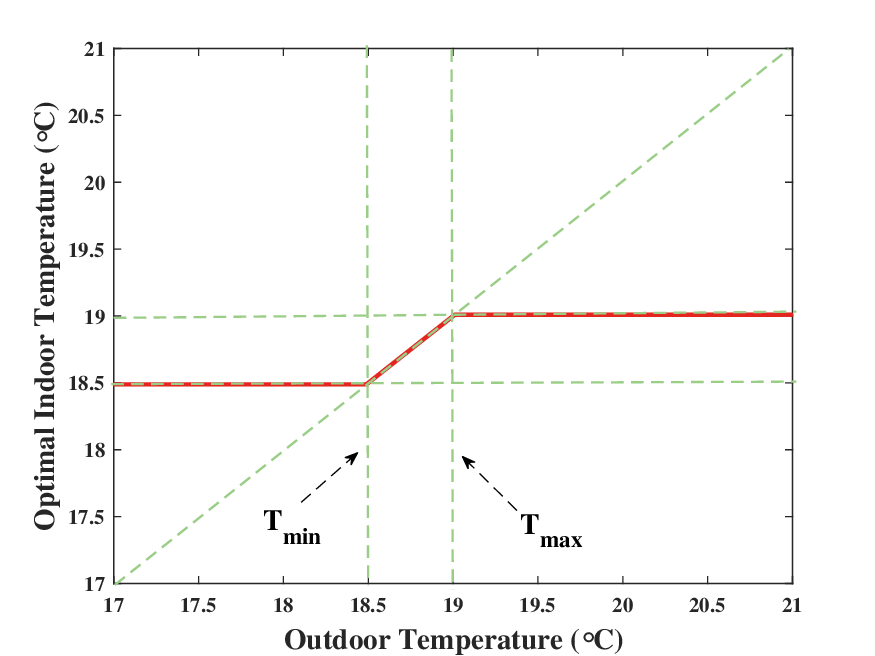}
		\caption{Optimal room temperature selection with variations in outdoor temperature.}
		\label{optimal_temp}
	\end{center}
\end{figure}

To determine the total number of users feeling discomfort at different temperatures, we define the sum of absolute values of discomfort signals as follows
\begin{equation}
g\left( T_r,\{\Delta _i\} \right) = \sum_i^N\left|{\eta _i\left( T_r,\Delta _i \right)} \right|.
\label{g_t_r}
\end{equation}
In Fig. \ref{g}, we observe that the orange curve either coincides with the cyan curve or is symmetrical to the cyan curve. This implies that the absolute value of the sum of discomfort signals is either equal or opposite between the two curves. However, there are also asymmetric situations between them, as shown in Fig. \ref{asy}. When the indoor temperature is between $18^\circ$C degrees and $18.5^\circ$C, two users feel cold $(+1)$, one user feels hot $(-1)$, and one user feels comfortable. In this case, the sum of discomfort signals is $1$, while the total number of users feeling discomfort is $3$. Then, $h$ and $g$ are neither equal nor symmetrical. To ensure fairness among users, we set the optimal indoor temperature range to the temperature obtained when $h=0$.
Therefore, based on the comfort tolerance set of all users, the optimal indoor temperature range needs to satisfy the following condition: 
\begin{align*}
\Upsilon := \left[ T_{\min}\left( \{\Delta _i\} \right) ,T_{\max}\left( \{\Delta _i\} \right) \right] =\{T_r:h\left( T_r,\{\Delta _i\} \right) =0\},
\end{align*}
where $T_{{\max}}$ and $T_{{\min}}$ represent the maximum and minimum optimal indoor temperatures, respectively. Therefore, regarding the optimal indoor temperature range, we can establish the following theorem:

\begin{thm}
If $\sup \limits_{i}\{T_i^*-\Delta_i\} < \inf \limits_{j}\{T_j^*-\Delta_j\} (i,j=\{1,2,...,N\})$, the optimal indoor temperature range $[T_{\min}, T_{\max}]$ satisfies
\begin{align}
    T_{\min} &= \sup \limits_{i}\{T_i^*-\Delta_i\}, \notag \\
    T_{\max} &= \inf \limits_{j}\{T_j^*+\Delta_j\}.
    \label{T_proof}
\end{align}
For any $T_r \in \Upsilon = \left[T_{\min}, T_{\max}\right]$, the sum of absolute values of discomfort signals from all users $g\left( T_r,\{\Delta _i\} \right)$ satisfies
\begin{align*}
g\left( T_r,\{\Delta _i\} \right)=0.
\end{align*}
\end{thm}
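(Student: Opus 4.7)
The plan is to show that the set $\Upsilon = \{T_r : h(T_r,\{\Delta_i\}) = 0\}$ coincides with the intersection $I := \bigcap_{i=1}^N [T_i^*-\Delta_i, T_i^*+\Delta_i]$ of all users' acceptable intervals, and then identify $I$ with $[T_{\min},T_{\max}]$ via the usual formula for intersecting closed intervals (well-defined as a non-degenerate interval precisely under the hypothesis, which I read as $\sup_i\{T_i^*-\Delta_i\} < \inf_j\{T_j^*+\Delta_j\}$; the statement's second ``$-\Delta_j$'' appears to be a typo, since only ``$+\Delta_j$'' makes the inequality consistent with the definition of $T_{\max}$). Once $\Upsilon = I$ is established, the final claim $g(T_r,\{\Delta_i\}) = 0$ on $\Upsilon$ is immediate because $T_r \in I$ forces every $\eta_i(T_r,\Delta_i) = 0$.

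First I would verify the easy inclusion $I \subseteq \Upsilon$. Since $N$ is finite, $T_{\min}$ and $T_{\max}$ are attained, and the standard identity $\bigcap_i[a_i,b_i] = [\max_i a_i, \min_i b_i]$ (non-empty under the hypothesis) gives $I = [T_{\min},T_{\max}]$. For any $T_r$ in this interval and for every user $i$, the chain
\begin{equation*}
T_i^* - \Delta_i \;\le\; T_{\min} \;\le\; T_r \;\le\; T_{\max} \;\le\; T_i^* + \Delta_i
\end{equation*}
places $T_r$ in user $i$'s tolerance band, so by the definition of $\eta_i$ in (3) we get $\eta_i(T_r,\Delta_i)=0$ for every $i$. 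Summing yields both $h(T_r,\{\Delta_i\}) = 0$ and $g(T_r,\{\Delta_i\}) = \sum_i |\eta_i(T_r,\Delta_i)| = 0$, which already discharges the final assertion of the theorem.

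Next I would establish the reverse inclusion $\Upsilon \subseteq I$ by a case split. Suppose $T_r < T_{\min}$; let $i^\star$ be an index achieving $T_{i^\star}^* - \Delta_{i^\star} = T_{\min}$. Then $T_r < T_{i^\star}^* - \Delta_{i^\star}$ gives $\eta_{i^\star} = +1$. For every other $j$, the inequality $T_r < T_{\min} \le T_{\max} \le T_j^* + \Delta_j$ rules out $\eta_j = -1$, so $\eta_j \in \{0,+1\}$, whence $h(T_r,\{\Delta_i\}) \ge 1 > 0$. A symmetric argument for $T_r > T_{\max}$ (using an index attaining the infimum and noting that no $\eta_j$ can equal $+1$) yields $h \le -1 < 0$. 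Hence any $T_r \notin [T_{\min},T_{\max}]$ lies outside $\Upsilon$, completing $\Upsilon = [T_{\min},T_{\max}]$ and hence (6).

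I do not anticipate a genuine obstacle; the argument is essentially bookkeeping on a finite family of intervals. The one point that needs a line of care is the endpoint behavior: since $\eta_i$ in (3) uses non-strict inequalities on the comfort band, the endpoints $T_{\min}$ and $T_{\max}$ belong to every user's interval, so $\Upsilon$ is a closed interval with no exceptional boundary case. I would also remark that the hypothesis is exactly the condition for $I$ to be non-empty (and indeed non-degenerate), so without it $\Upsilon$ can be empty, explaining why the assumption is needed.
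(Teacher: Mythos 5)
Your proposal is correct and follows essentially the same route as the paper: reading the hypothesis as $\sup_i\{T_i^*-\Delta_i\} < \inf_j\{T_j^*+\Delta_j\}$ (you are right that the second ``$-\Delta_j$'' is a typo), you identify the zero set of $h$ with the intersection $\bigcap_i [T_i^*-\Delta_i,\,T_i^*+\Delta_i] = [T_{\min},T_{\max}]$ and observe that every $\eta_i$ vanishes there, so $g=0$ as well. In fact you are somewhat more careful than the paper, which merely asserts that the optimal range is the common intersection, whereas your case split for $T_r < T_{\min}$ and $T_r > T_{\max}$ explicitly establishes the reverse inclusion $\Upsilon \subseteq [T_{\min},T_{\max}]$ implicit in the definition $\Upsilon=\{T_r : h(T_r,\{\Delta_i\})=0\}$.
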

\begin{proof}
By definition, given that $\sup \limits_{i}\{T_i^*-\Delta_i\} < \inf \limits_{j}\{T_j^*+\Delta_j\}$, it implies that there exists an intersection in the acceptable temperature ranges for all users, denoted as $[\sup \limits_{i}\{T_i^*-\Delta_i\}, \inf \limits_{j}\{T_j^*+\Delta_j\}]$. The optimal indoor temperature range is the common intersection that satisfies all users. Consequently, this range can be determined, providing the maximum and minimum values of indoor temperature as given in Eq. (\ref{T_proof}).
For any $T_r \in \Upsilon$, where $T_{\min} \leq T_r \leq T_{\max}$, it holds true that for any user $i$, $T_r$ is within the range $T_i^*-\Delta_i < T_r < T_i^*+\Delta_i$.
Now, according to Eq. (\ref{g_t_r}), since $T_r$ lies within the range $[\sup \limits_{i}\{T_i^*-\Delta_i\}, \inf \limits_{j}\{T_j^*+\Delta_j\}]$ for all $i, j$, it guarantees that each term $\left|{\eta _i\left( T_r,\Delta _i \right)} \right|$ will be equal to zero when $T_r$ is within the acceptable temperature range for all users.
\end{proof}

\begin{figure}
	\begin{center}
		\includegraphics[width=0.5\textwidth]{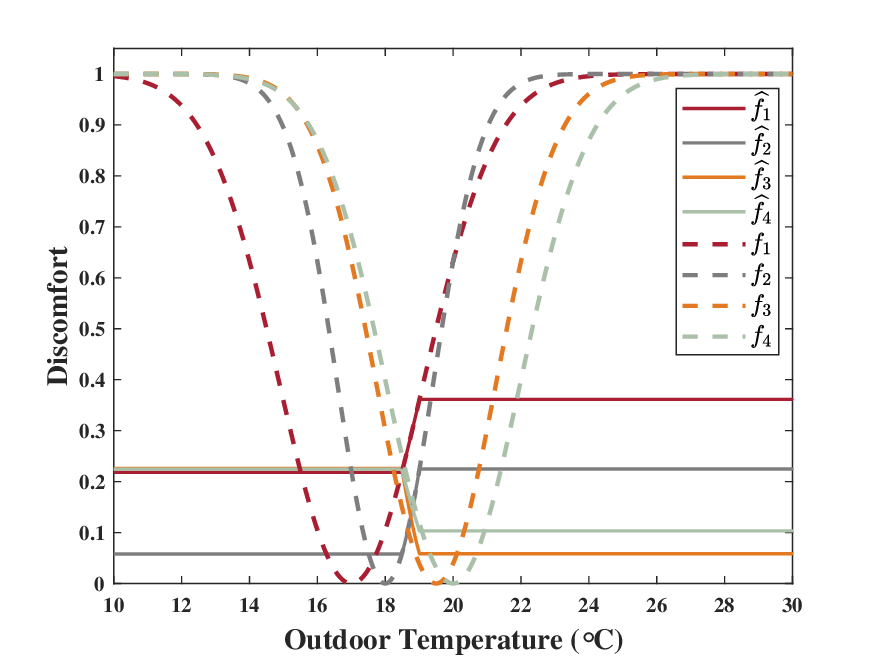}
		\caption{Solid lines: taking optimal indoor temperature, the absolute values of discomfort for the four users; Dash lines: the absolute discomfort for four users.}
		\label{discomfort_f}
	\end{center}
\end{figure}

\section{COMFORT CONTROL DESIGN}
Based on the user comfort modeling and response presented in the previous section, here we provide the control design to be implemented by the energy management system that minimizes energy consumption while maintaining a optimal thermal comfort level.

To minimize energy consumption to the greatest extent, we can formulate it as the following optimization problem:
\begin{align}
&T_{r}^{*}=\underset{T_r}{\min}P\left( T_{ext},T_r \right) =\mu|T_{ext}-T_r|, \notag \\
&~~~~~~~~\text{s.t.} ~~h\left( T_r,\{\Delta _i\} \right) =0,
\end{align}
where $P$ represents the power consumption of the energy management systems, and $\mu$ is a constant.
Then, the optimal indoor temperature $T_{r}^{*}$ is defined as follows
\begin{align}
	T_{r}^{*}\left( T_{ext},\{\Delta _i\} \right) 
	=\left\{ \begin{array}{l}
		T_{\min} ,~T_{ext}<T_{\min},\\
		T_{ext}, ~~ T_{\min}<T_{ext}<T_{\max},\\
		T_{\max} , ~T_{ext}>T_{\max}.
	\end{array} \right.
\end{align}
In Fig. \ref{optimal_temp}, the optimal indoor temperature can be observed adjusting with the change in outdoor temperature.
\begin{remark}
The HVAC system optimizes energy conservation by adjusting the indoor temperature based on outdoor conditions. If the outdoor temperature $T_{\text{ext}}$ is below ${T_{{\min}}}$, the indoor temperature is set to the minimum of the optimal range, ${T_{{\min}}}$. When the outdoor temperature is within the range of ${T_{{\min}}}$ to ${T_{{\max}}}$, the indoor temperature follows the outdoor trend. When the outdoor temperature exceeds ${T_{{\max}}}$, the indoor temperature is adjusted to ${T_{{\max}}}$.
\end{remark}

$T_{r}^{*}$ is influenced by both the outdoor temperature and the set of thermal comfort tolerances of users. Let
$T_{r}^{*}=\Psi \left( T_{ext},\{\Delta _i\} \right). $
Then, we can obtain the absolute value $\widehat{f_i}$ of discomfort expected by user $i$ as
\begin{align}
\widehat{f_i}\left( T_{ext} \right) :=&f_i\left( T_{r}^{*} \right)
\notag \\
=&f_i\left( \Psi \left( T_{ext},\{\Delta _i\} \right) \right). 
\end{align}
According to the optimal indoor temperature, the absolute change in discomfort expectations for each user is illustrated in Fig. \ref{discomfort_f}.

For user $i$, it is only possible to determine their own desired indoor temperature $T_i^*$ and comfort tolerance $\Delta_i$, and it is not possible to determine the comfort tolerance of other users $\Delta_{-i}$. Therefore, in order to maximize their own thermal comfort, the utility value $u_i$ for user $i$ can be defined as:
\begin{align}
u_i\left( \Delta _i,\Delta _{-i} \right) :=&\underset{T_{ext}}{\max}f_i\left( \Psi \left( T_{ext},\{\Delta _i\} \right) \right) \notag \\
=&\max \left\{ f_i\left( T_{\min} \right) ,f_i\left( T_{\max} \right) \right\} .
\end{align}
However, for the building energy management system, it is necessary to consider the discomfort of all users. Therefore, when the comfort tolerance $\Delta_i$ of each user is not a fixed value, the building energy management system considers the worst scenario, which is the maximum discomfort among users at each step of the comfort tolerance $\Delta_i$ variation. Then, the maximum overall thermal discomfort is
\begin{align}
	y=\max u_i\left( \Delta _i,\Delta _{-i} \right) .
\end{align}

\begin{figure}
	\centering
	\subfigure[$\Delta=0$]{
		\includegraphics[width=0.45\textwidth]{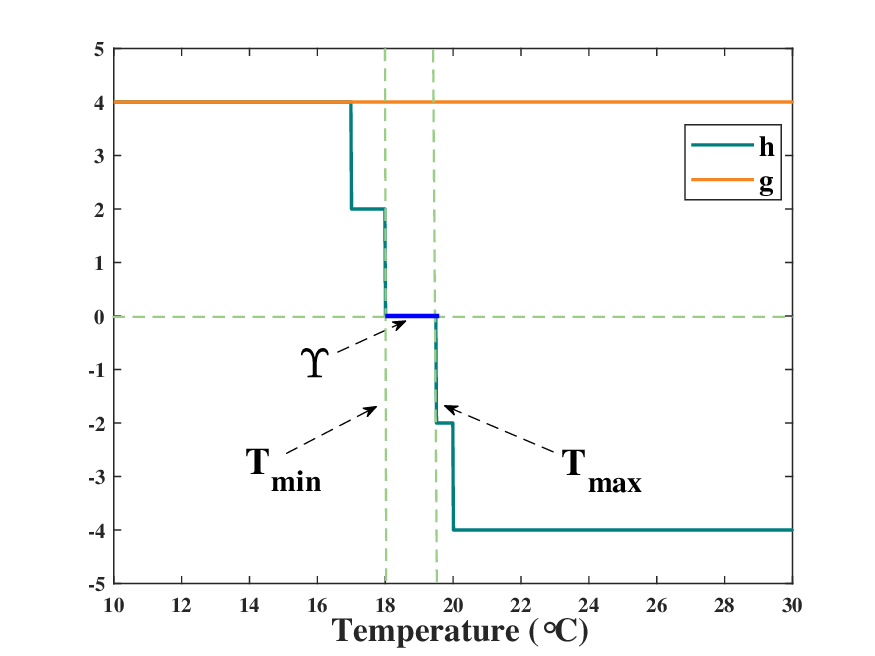}
		\label{different_sum_dis1}
	}
	\subfigure[$\Delta=1$]{
		\includegraphics[width=0.45\textwidth]{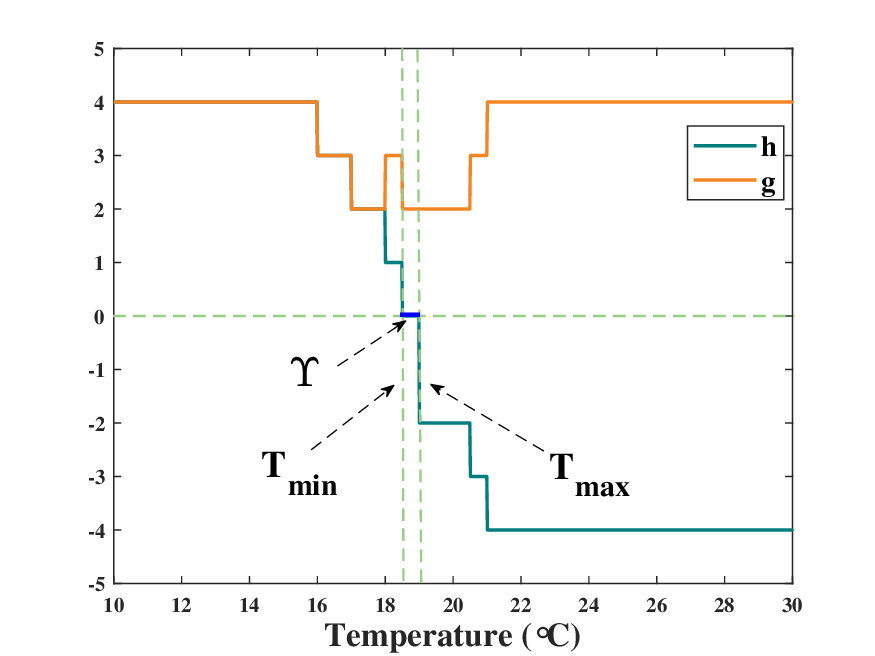}
		\label{different_sum_dis2}
	}
	\subfigure[$\Delta=2$]{
		\includegraphics[width=0.45\textwidth]{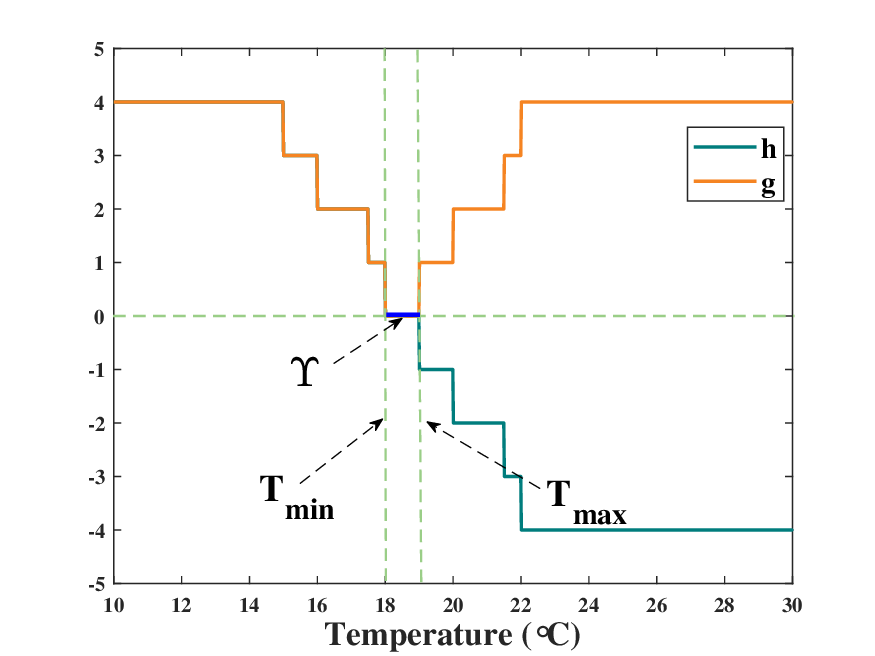}
		\label{different_sum_dis3}
	}
	\subfigure[$\Delta=3$]{
		\includegraphics[width=0.45\textwidth]{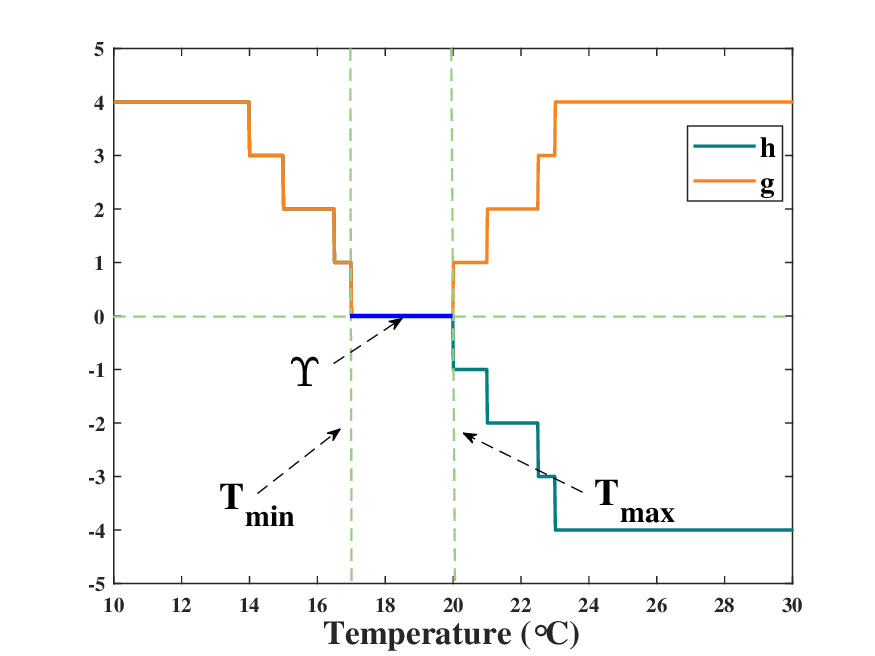}
		\label{different_sum_dis4}
	}
	\caption{The sum of discomfort signals for four users under different thermal comfort tolerances, where $\Delta_i=\Delta~(\forall \ i=1,2,3,4)$.}
	\label{different_sum_dis}
\end{figure}

Additionally, the variation in room temperature is influenced by the heat exchange between the room and the external environment, as well as the presence of additional heat sources (or cooling sources). Next, the control of the indoor temperature by the HVAC system is considered through the following heat transfer differential equation:
\begin{align}
\dot{T_r}(t)=-c\left( T_r(t)-T_{ext}(t) \right) +w(t),
\end{align}
where $ T_r(t) $ and $ T_{ext}(t) $ are the room and outdoor temperature at time $ t $, respectively. $ c $ is a positive constant representing the heat exchange coefficient between the room and the external environment, determining the rate at which the room temperature adjusts to the external temperature, and $ w(t) $ represents the temperature control provided by the HVAC system at time $ t $. Therefore,  Eq. (\ref{sum_h}) can be rewritten at time $t$ as:
$h\left( t \right) =\sum_i{\eta _i\left( T_r\left( t \right) \right)}$,
and the control of the HVAC system is:
\begin{align}
	\label{w_t}
w(t)=-kh\left( t \right), 
\end{align}
where $k$ is the control gain.
\begin{remark}
Derived from individual preferences, discomfort signals are generated based on users' ideal indoor temperature and comfort tolerance. The building energy management system aims to minimize energy consumption by optimizing to achieve zero total discomfort signals for all users, thus determining the optimal indoor temperature range. The HVAC system dynamically adjusts heating or cooling based on outdoor temperature fluctuations and the optimal room temperature from Eq. (\ref{w_t}). The objective is to minimize user discomfort while maximizing energy efficiency.
\end{remark}

\addtolength{\textheight}{0cm}   




\section{SIMULATIONS AND RESULTS}

In this section, we aim to investigate the variation in the optimal indoor temperature and users' discomfort when the users' thermal comfort tolerances change. We consider a room with four users, i.e., $N=4$. Their desired indoor temperatures are $T^* = [17, 18, 19.5, 20]$, and the values of their temperature sensitivities are $\sigma = [3, 2, 2.5, 2.8]$. Setting the range of each user $i$'s variable thermal comfort tolerance as: $\Delta_i \in [0, 3]$.
To implement this, we discretized the tolerance range with a step size of $\delta_i = 0.03$.

Considering the thermal comfort tolerance range, we observe the continuous variation of the total discomfort signal and its absolute values for all users. Specifically, we highlight four scenarios ($\Delta_i =\Delta = 0, 1, 2, 3$) in Fig. \ref{different_sum_dis}. Notably, in cases where $\Delta_i =\Delta = 0$ and $\Delta_i =\Delta = 1$, $h \neq g$ when $h = 0$. This suggests that, even within the optimal indoor temperature range, some users remain dissatisfied. However, achieving $h = 0$ implies fairness, indicating an equal number of users experiencing cold and hot sensations. For $\Delta_i =\Delta = 2$ and $\Delta_i =\Delta = 3$, $h = g$ when $h = 0$. This implies that under optimal indoor temperature conditions, all users feel satisfied. In such instances, for $h > 0$, the sum of discomfort signals and its absolute value satisfy $g = h$, while for $h < 0$, $g = -h$, demonstrating symmetry.

Due to the goal of the building energy management system, the optimal indoor temperature range varies accordingly with changes in users' thermal comfort tolerance, as illustrated in Fig. \ref{min_max_temp}. When users have a thermal comfort tolerance of $3$, their acceptable temperature range expands, leading to a broader optimal indoor temperature range $[17, 20]$ where the sum of discomfort signals is zero. Conversely, with a thermal comfort tolerance of $1.5$, the optimal indoor temperature range narrows, with maximum and minimum values converging around $18.5^\circ$C.

\begin{figure}
	\begin{center}
		\includegraphics[width=0.5\textwidth]{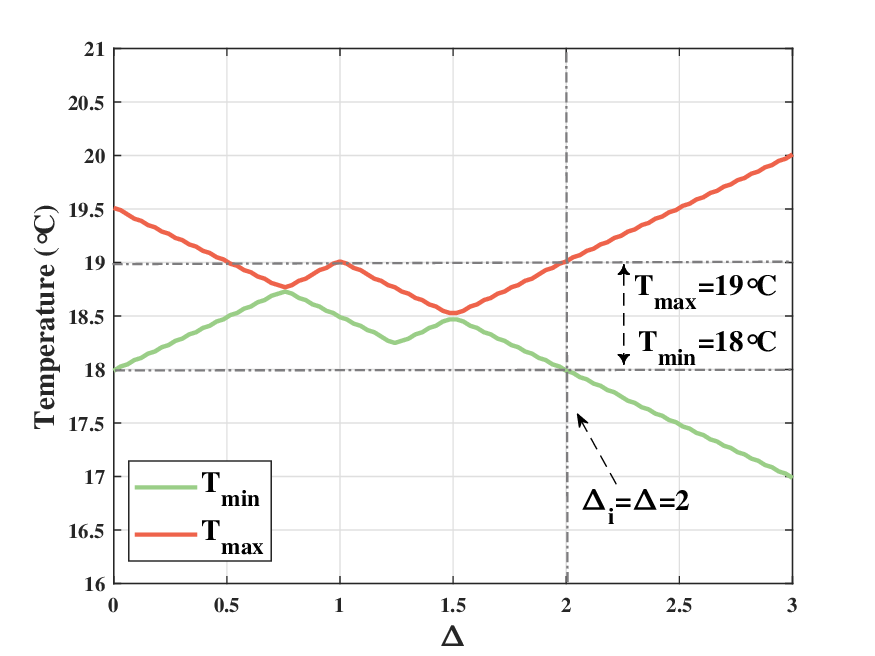}
		\caption{Maximum and Minimum Temperatures with different $\Delta$, where $\Delta_i=\Delta~(\forall \ i=1,2,3,4)$.}
		\label{min_max_temp}
	\end{center}
\end{figure}

\begin{figure}
	\centering
		\includegraphics[width=0.5\textwidth]{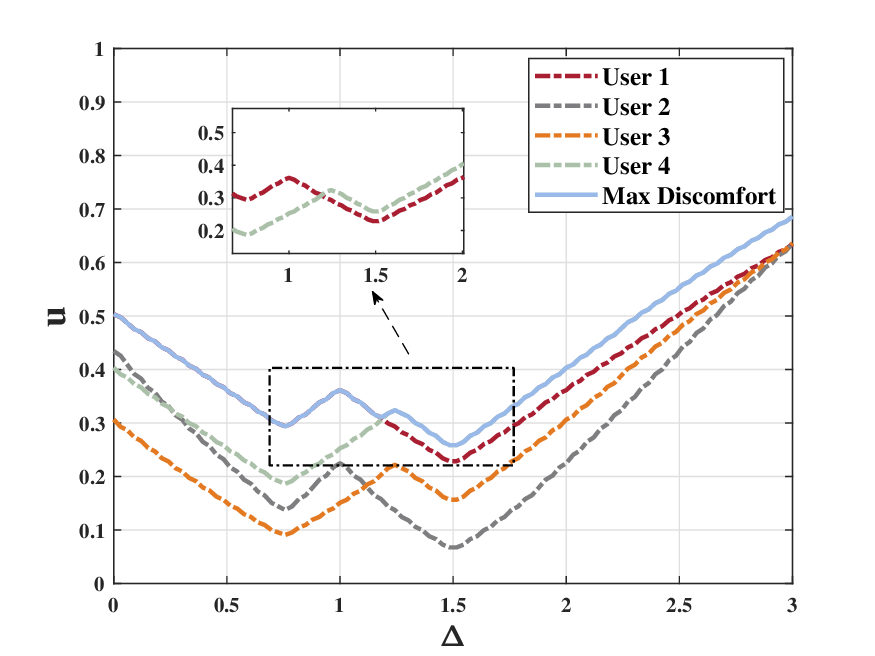}
		\caption{Dash lines: individual discomfort for each user varies with different $\Delta$, where $\Delta_i=\Delta~(\forall \ i=1,2,3,4)$; Solid line: the maximum discomfort of four users.}
		\label{individual_dis}
\end{figure}
\begin{figure}
	\begin{center}
		\includegraphics[width=0.5\textwidth]{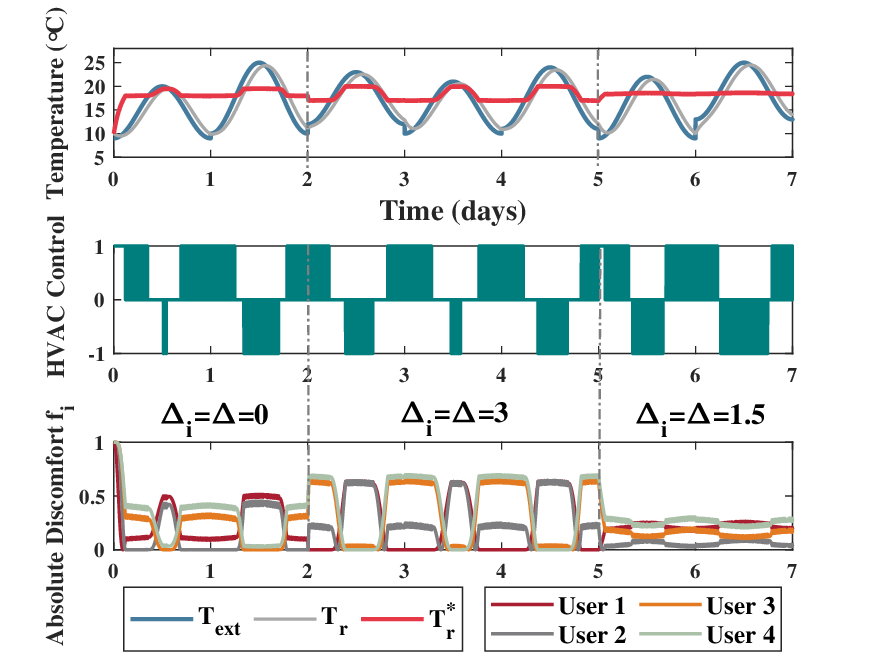}
		\caption{Top and Middle: the variation of indoor temperature with outdoor temperature throughout the week, as well as the indoor temperature after HVAC control; Bottom: the absolute values of discomfort for four users within the optimal indoor temperature range throughout one week.}
		\label{HVAC_control}
	\end{center}
\end{figure}

Next, we analyzed individual discomfort for users based on varying thermal comfort tolerance $\Delta_i$, as shown in Fig. \ref{individual_dis}. In the range $\Delta_i \in [0, 0.5]$, increasing $\Delta_i$ led to a gradual decrease in discomfort for $4$ users, with user $1$ experiencing the highest and user $4$ the lowest discomfort. For $\Delta_i \in [0.5, 1]$, discomfort initially decreased and then increased for all users. In the range $\Delta_i \in [1, 1.5]$, user $2$ consistently had the lowest discomfort, reaching a minimum at $\Delta_i = 1.5$. From $\Delta_i \in [1.5, 3]$, discomfort for all users significantly increased with $\Delta_i$. The blue solid curve in Fig. \ref{individual_dis} represents the maximum discomfort of users at different $\Delta_i$. Notably, discomfort fluctuations were relatively small between $\Delta_i = 0$ and $\Delta_i = 1.5$, indicating stability in user discomfort. Beyond $\Delta_i = 1.5$, discomfort markedly increased. At $\Delta_i = 1.5$, the maximum discomfort among users was only $0.26$, whereas at $\Delta_i = 3$, it approached $0.69$. Thus, a higher thermal comfort tolerance doesn't necessarily lead to better outcomes for users.

Finally, we simulated indoor temperature variations over a week controlled by the HVAC system in response to changing outdoor temperatures. In Fig. \ref{HVAC_control}, we set the weekly minimum temperature range to $[9, 13]$ and the maximum range to $[20, 25]$. Outdoor temperatures for each day were randomly generated. For $4$ users, based on the optimal indoor temperature range obtained from Fig. \ref{min_max_temp}, we divided the settings for each day of the week into different stages. Specifically, the optimal indoor temperature for the first and second days is set to $[18, 19.5]$, for the third to fifth days is set to $[17, 20]$, and for the last two days is set to $[18.4, 18.6]$. These settings correspond to different comfort tolerance levels, namely, $\Delta_i =\Delta =0, 3, 1.5$. Additionally, the heat capacity coefficient $c = 0.1$, and the control gain $k = 1$. In the indoor temperature controlled by the HVAC system, it can be observed that when \(\Delta_i=\Delta = 1.5\), the overall discomfort of the four users reaches the minimum, with user $4$ experiencing a maximum discomfort of less than $0.3$. However, when \(\Delta_i=\Delta = 3\), the disparity in discomfort among users is the greatest.

\section{CONCLUSIONS AND FUTURE WORK}
In standard HVAC comfort settings, despite high energy consumption, users often face challenges related to thermal comfort. Our focus is on developing innovative thermal comfort control technology, addressing users' comfort issues through temperature control while aiming to reduce energy consumption. This paper introduces a novel thermal comfort model and proposes a control approach using direct user feedback signals on comfort/discomfort. The integration of HVAC control in the energy management system with individual comfort preferences can harmonize users' personalized thermal needs while simultaneously reducing energy consumption. Through this method, we determine the optimal indoor temperature range to minimize user discomfort.

Future works could focus on three aspects. Firstly, it is possible to strategically optimize individual thermal comfort models by integrating them with advanced temperature control techniques such as model predictive control. Secondly, integrating thermal comfort control with personal comfort systems holds the potential to further customize and personalize thermal comfort strategies, enhancing overall efficiency and user satisfaction. Furthermore, since each individual can only determine their own thermal comfort tolerance and is unable to ascertain the thermal comfort tolerance of other users, analyzing user decisions in choosing optimal thermal comfort tolerance from a game theory perspective could maximize the overall comfort of the users.

\bibliographystyle{IEEEtran}
\bibliography{reference}

\begin{thebibliography}{10}
\providecommand{\url}[1]{#1}
\csname url@samestyle\endcsname
\providecommand{\newblock}{\relax}
\providecommand{\bibinfo}[2]{#2}
\providecommand{\BIBentrySTDinterwordspacing}{\spaceskip=0pt\relax}
\providecommand{\BIBentryALTinterwordstretchfactor}{4}
\providecommand{\BIBentryALTinterwordspacing}{\spaceskip=\fontdimen2\font plus
\BIBentryALTinterwordstretchfactor\fontdimen3\font minus \fontdimen4\font\relax}
\providecommand{\BIBforeignlanguage}[2]{{%
\expandafter\ifx\csname l@#1\endcsname\relax
\typeout{** WARNING: IEEEtran.bst: No hyphenation pattern has been}%
\typeout{** loaded for the language `#1'. Using the pattern for}%
\typeout{** the default language instead.}%
\else
\language=\csname l@#1\endcsname
\fi
#2}}
\providecommand{\BIBdecl}{\relax}
\BIBdecl

\bibitem{kizilkale2019integral}
A.~C. Kizilkale, R.~Salhab, and R.~P. Malham{\'e}, ``An integral control formulation of mean field game based large scale coordination of loads in smart grids,'' \emph{Automatica}, vol. 100, pp. 312--322, 2019.

\bibitem{eichler2018humans}
A.~Eichler, G.~Darivianakis, and J.~Lygeros, ``Humans-in-the-loop: a game-theoretic perspective on adaptive building energy systems,'' in \emph{Proc. Eur. Control Conf. (ECC)}.\hskip 1em plus 0.5em minus 0.4em\relax IEEE, 2018, pp. 1322--1327.

\bibitem{tavakkoli2020bonus}
M.~Tavakkoli, S.~Fattaheian-Dehkordi, M.~Pourakbari-Kasmaei, M.~Liski, and M.~Lehtonen, ``Bonus-based demand response using stackelberg game approach for residential end-users equipped with hvac system,'' \emph{IEEE Trans. Sustain. Energy}, vol.~12, no.~1, pp. 234--249, 2020.

\bibitem{heijs1994dependent}
W.~Heijs, ``The dependent variable in thermal comfort researchsome psychological considerations,'' in \emph{Build. Res. Estab.}, 1994, pp. 40--51.

\bibitem{markus1980buildings}
T.~A. Markus and E.~N. Morris, ``Buildings, climate, and energy,'' 1980.

\bibitem{fanti2015district}
M.~P. Fanti, A.~M. Mangini, M.~Roccotelli, and W.~Ukovich, ``A district energy management based on thermal comfort satisfaction and real-time power balancing,'' \emph{IEEE Trans. Autom. Sci. Eng.}, vol.~12, no.~4, pp. 1271--1284, 2015.

\bibitem{al2022quantifying}
H.~Al~Jebaei and A.~Aryal, ``Quantifying the impact of personal comfort systems on thermal satisfaction and energy consumption in office buildings under different us climates,'' \emph{Energy Build.}, vol. 274, p. 112448, 2022.

\bibitem{jung2019human}
W.~Jung and F.~Jazizadeh, ``Human-in-the-loop hvac operations: A quantitative review on occupancy, comfort, and energy-efficiency dimensions,'' \emph{Appl. Energy}, vol. 239, pp. 1471--1508, 2019.

\bibitem{gulbinas2015segmentation}
R.~Gulbinas, A.~Khosrowpour, and J.~Taylor, ``Segmentation and classification of commercial building occupants by energy-use efficiency and predictability,'' \emph{IEEE Trans. Smart Grid}, vol.~6, no.~3, pp. 1414--1424, 2015.

\bibitem{squercinatemperature}
S.~Squercina, ``From temperature control to sensation control: a new humans-in-the-building paradigm using personal comfort systems,'' \emph{MS Thesis, University of Padova}, 2023.

\bibitem{fanger1970thermal}
P.~O. Fanger \emph{et~al.}, ``Thermal comfort. analysis and applications in environmental engineering.'' \emph{Thermal Comfort Anal. Appl. Environ. Eng.}, 1970.

\bibitem{inoue2019weak}
M.~Inoue and V.~Gupta, ``“weak” control for human-in-the-loop systems,'' \emph{IEEE Control Syst. Lett}, vol.~3, no.~2, pp. 440--445, 2019.

\bibitem{sadowska2023predictive}
A.~D. Sadowska, J.~M. Maestre, R.~Kassking, P.~van Overloop, and B.~De~Schutter, ``Predictive control of a human--in--the--loop network system considering operator comfort requirements,'' \emph{IEEE Trans. Syst. Man. Cybern. Syst.}, 2023.

\bibitem{aryal2018energy}
A.~Aryal and B.~Becerik-Gerber, ``Energy consequences of comfort-driven temperature setpoints in office buildings,'' \emph{Energy Build.}, vol. 177, pp. 33--46, 2018.

\end{thebibliography}

\end{document}